\newtheorem{theorem}{Theorem}
\newtheorem{lemma}[theorem]{Lemma}
\newtheorem{lemma*}{Lemma}
\newtheorem{corollary}[theorem]{Corollary}
\newtheorem{problem}{Problem}
\newcommand{\qedsymb}{\hfill{\rule{2mm}{2mm}}}  
\newenvironment{proof}[1][]{\begin{trivlist}  
  \item[\hspace{\labelsep}{\it\noindent Proof#1:\/}]}
  {\qedsymb\end{trivlist}}
\newcommand{\Eq}[1]{Eq.~(\ref{#1})}
\newcommand{\Fig}[1]{Fig.~\ref{#1}}
\newcommand{\Lem}[1]{Lemma~\ref{#1}}
\newcommand{\Cor}[1]{Corollary~\ref{#1}}
\newcommand{\Sec}[1]{Sec.~\ref{#1}}
\newcommand{\Ref}[1]{Ref.~\onlinecite{#1}}
\newcommand{\Refs}[1]{Refs.~\onlinecite{#1}}
\newcommand{\cc}[1]{~\cite{#1}}
\newcommand{\ket}[1]{|#1\rangle}
\newcommand{\bra}[1]{\langle#1|}
\newcommand{\av}[1]{\langle#1\rangle}
\newcommand{\norm}[1]{\|#1\|}
\newcommand{\EqDef}{:=}
\newcommand{\Tr}{\mathop{\rm Tr}\nolimits}
\newcommand{\Sp}{\mathop{\rm span}\nolimits}
\newcommand{\supp}{\mathop{\rm supp}\nolimits}
\newcommand{\Id}{\ensuremath{\mathbbm{1}}}
\newcommand{\orderof}[1]{\mathcal{O}(#1)}
\newcommand{\eps}{\epsilon}
\newcommand{\gs}{\Omega}
\newcommand{\peps}{\Omega_p}
\newcommand{\DL}{\mathop{\rm DL}\nolimits}
\begin{document}

\title{How local is the information in tensor networks of matrix
  product states and entangled pairs states}

\author{Anurag Anshu}
\affiliation{Centre for Quantum Technologies,
  National University of Singapore, Singapore}
\author{Itai Arad}
\email{arad.itai@fastmail.com}
\affiliation{Centre for Quantum Technologies, National University 
  of Singapore, Singapore}
\author{Aditya Jain}
\affiliation{Center for Computational Natural Sciences and
  Bioinformatics, International Institute of Information
  Technology-Hyderabad, India}

\date{\today} 

\begin{abstract}
  Two dimensional tensor networks such as projected entangled pairs
  states (PEPS) are generally hard to contract. This is arguably the
  main reason why variational tensor network methods in 2D are still
  not as successful as in 1D. However, this is not necessarily the
  case if the tensor network represents a gapped ground state of a
  local Hamiltonian; such states are subject to many constraints and
  contain much more structure. In this paper we introduce an
  approach for approximating the expectation value of a local
  observable in ground states of local Hamiltonians that are
  represented by PEPS tensor-networks. Instead of contracting the
  full tensor-network, we try to estimate the expectation value
  using only a local patch of the tensor-network around the
  observable. Surprisingly, we demonstrate that this is often easier
  to do when the system is frustrated. In such case, the spanning
  vectors of the local patch are subject to non-trivial constraints
  that can be utilized via a semi-definite program to calculate
  rigorous lower- and upper-bounds on the expectation value. We test
  our approach in 1D systems, where we show how the expectation
  value can be calculated up to at least 3 or 4 digits of precision,
  even when the patch radius is smaller than the correlation length.
  
\end{abstract}
\maketitle

\section{Introduction}

Variational tensor-network methods\cc{ref:Orus2014-TN-Intro} provide
a promising way for understanding the low-temperature physics of
many-body condensed matter systems. In particular, they seem
suitable for studying the ground states of highly frustrated
systems, where the sign problem limits many of the quantum Monte
Carlo approaches. The best-known and by far the most successful
tensor-network method is the Density Matrix Renormalization Group
(DMRG) algorithm\cc{ref:White1992-DMRG1, ref:White1993-DMRG2}. It
has revolutionized our ability to numerically probe strongly
correlated systems in 1D, often providing results up to
machine precision. Today, it is generally believed to be the optimal
method for finding ground states of 1D lattice models.

DMRG can be viewed as a variational algorithm for minimizing the
energy of the system over the manifold of Matrix Product States
(MPS)\cc{ref:OR1995-DMRG-as-MPS1, ref:RO1997-DMRG-as-MPS2}. These
are special types of tensor-network with a linear, 1D structure. In
2D and beyond, several generalizations of this approach are
possible. Arguably, the most natural generalization is Projected
Entangled Pairs State (PEPS) tensor network\cc{ref:VC2004-PEPS, 
ref:VMC2008-PEPS-rev}, which was introduced by Verstraete and Cirac
in 2004\cc{ref:VC2004-PEPS}, but was also used earlier under
different names such as ``vertex matrix product ansatz'' in
\Ref{ref:SM1998-DMRG-CFT}, ``tensors product form ansatz'' (TPFA) in
\Ref{ref:HOK1999:NRG2D}, and ``tensor product state'' (TPS) in
\Ref{ref:NHYOMAG2001-TPS}. It is the main tensor-network that we
consider in this paper. Other constructions include Tree Tensor
Networks\cc{ref:SDV2006-TreeTN}, Multi-scale Entanglement
Renormalization ansatz (MERA)\cc{ref:Vidal2008-MERA}, String bond
states\cc{ref:SWFC2008-String-States}, and the recently introduced
Projected Entangled Simplex States (PESS)\cc{ref:XCYFNX2014-PESS},
to name a few.  These tensor-networks have proven a vital
theoretical tool for understanding the physics of 2D lattice systems
and in particular their entanglement structure. However, as a
numerical method for studying highly frustrated 2D quantum systems,
they still face substantial challenges which limit their
applicability. In most cases, the best results are still obtained
either by DMRG, in which a 1D MPS wraps around the 2D surface, or by
quantum Monte Carlo methods.  

There are several reasons for this qualitative difference between 1D
and 2D systems. The most important one is the computational cost of
contracting 2D tensor networks. While this cost scales
linearly in the 1D case, it is exponential for 2D and above.
Formally, this is reflected in the fact that contracting a PEPS is
\#P-hard\cc{ref:SWVC2007-PEPS-Complexity}, which is at least
NP-hard. To overcome this exponential hurdle, many approximation
schemes have been devised. For example, in the original PEPS paper,
the network is contracted column by column from left to right, by
treating the tensors of a column as matrix product operators (MPO)
that act on a MPS that represents the contracted part of the
network. Throughout the contraction, the bond dimension of the MPS
is truncated to some prescribed $D'$, which introduces some errors
(for details, see \Refs{ref:VC2004-PEPS, ref:VMC2008-PEPS-rev}).
Other approximate methods include, for example, the Corner Transfer
Matrix (CTM)\cc{ref:NO1996-CTM0, ref:OV2009-CTM1, ref:Orus2012-CTM2,
ref:VMVH2015-CTM3}, coarse-graining by tensor
Renormalization\cc{ref:LN2007-TRG, ref:XJCWX2009-SRG1,
ref:ZXCWCX2010-SRG2, ref:XCQZYX2012-HOTRG}, the single layer
method\cc{ref:PWV2011-Single-Layer} and other variants.

While all of the above methods are physically motivated, none of
them are rigorous. And to some extent they all produce uncontrolled
approximations, even when dealing with the ground state itself.
Moreover, while their computational cost is now polynomial in the
bond dimension and the particle number, it still scales badly, which
limits their practical use to small systems/resolutions (state of
the art now days is around $15\times 15$ sites with $D=6$\
\cc{ref:LCB2014-Cluster2}). This has led some researchers to develop
the popular simulation framework, known as the ``simple update''
method, in which one completely abandons the contraction of the 2D
network during the variational
procedure\cc{ref:JWX2008-Simple-Update}, essentially approximating
the environment of a local tensor by a product state. While this
allows for much higher bond dimensions and is often successful for
translationally invariant systems, it may produce poor results for
systems that approach critically; see, for example, the analysis in
\Refs{ref:LCB2014-Cluster1, ref:LCB2014-Cluster2}.

In this paper we introduce a new approach for approximating the
expectation values of local observables in a 2D PEPS tensor-network.
Instead of contracting the \emph{full} tensor network (or
approximating such full contraction), we aim at approximating the
expectation value using only the information inside a \emph{local}
patch of the tensor-network around the local observable. Clearly,
this approach will fail for general PEPS since they often contain
highly non-local correlations. However, as we shall demonstrate,
such an approach can produce non-trivial results when applied to
PEPS that are ground states of gapped local Hamiltonains. Indeed, it
is known that such states exhibit strong properties of locality,
such as exponential decay of correlations\cc{ref:Hastings2004-EXP,
ref:HK2006-EXP, ref:Nachtergaele2006-LR} and local
reversibility\cc{ref:KAAV2015-LR}, and are therefore subject to many
constraints to which arbitrary PEPS are not. Moreover, we have the
local Hamiltonian at our hands, which can be used to evaluate the local
expectation value. Our goal in this paper is to show that for these
states, very good approximations for the local expectation values can
be derived only from a local patch.

We identify two possible mechanisms for such approximations, which
form the basis for two numerical algorithms. In both algorithms, the
entire calculation is local and can therefore be done, in principle,
efficiently.  Moreover, both algorithms provide \emph{rigorous}
upper- and lower- bounds on the expectation value. While we usually
cannot give rigorous bound on the distance between these bounds, we
demonstrate numerically that this distance -- and hence the error in
our approximation -- can be surprisingly small.

The first algorithm, which we call the `basic algorithm', is
expected to give good results in the case of frustration-free gapped
systems. The second one, which we call the `commutator gauge
optimization' (CGO) algorithm, works only for frustrated systems by
utilizing the additional constraints in these systems. It does not
rely directly on the existence of a gap, and may work even when
considering patches of the PEPS that are much smaller than the
correlation length. We numerically demonstrate both algorithms using
MPS in 1D systems. Regarding the title of this
paper, our findings suggest that when the PEPS/MPS tensor-network
represents a frustrated ground state, the information about the
local expectation value is largely encoded locally in the
neighborhood of the observable. 

We would like to stress from the start that the algorithms we
present are \emph{not} practical, and can only be used in 1D in a
reasonable time. Their main goal, which is the main goal of this
paper, is to illustrate the \emph{locality of information} in
MPS/PEPS representations of gapped ground states. Nevertheless, we
strongly believe that the mechanisms behind these algorithms can be
further exploited and turned into practical heuristic algorithms. We
leave this interesting research direction for further work.

The organization of this paper is as follows. In \Sec{sec:setup} we
formally define the problem we wish to solve and the assumptions we
are using. In \Sec{sec:basic} we introduce the basic algorithm to
solve the problem, and in \Sec{sec:CGO} we introduce the CGO
algorithm. In \Sec{sec:numerics} we present the results of the 1D
numerical tests, and in \Sec{sec:discussion} we discuss the results
and offer our conclusions.

\section{Statement of the problem}
\label{sec:setup}

Our construction involves PEPS and MPS tensor networks. For the
exact definition and review of these tensor networks (and tensor
networks in general), we refer the reader to
\Refs{ref:Orus2014-TN-Intro, ref:VMC2008-PEPS-rev}.

We are given a local Hamiltonian $H=\sum_i h_i$ that is defined on a
system of $N\times N$ spins of local dimension $d$ that sit on a 2D
rectangular lattice, which can be either open or with periodic
boundary conditions. The local terms $h_i$ have $\orderof{1}$
bounded norms, and are assumed to be working on nearest-neighbors
only. We further assume that $H$ is gapped, with a unique ground
state $\ket{\gs}$ that is \emph{exactly} described by a PEPS with
bond dimension $D=\orderof{1}$.  While in practice this is rarely
the case, and $D$ may have to be exponentially large in order for
the PEPS to exactly describe the ground state, it is expected that a
$D=\orderof{1}$ bond dimension should give very good approximations
to a gapped ground state. For the sake of clarity we first assume
that the $D=\orderof{1}$ description is exact and return to discuss
this assumption in \Sec{sec:constant-D}.

Given a local observable $B$ that acts on, say, 2 neighboring spins
on the lattice, our task is to approximate $\av{B}\EqDef
\bra{\gs}B\ket{\gs}$ using only a local patch of the PEPS around
$B$. Specifically, the local patch is defined by a ball $L$ of
radius $\ell$ around $B$. $L$ contains all the sites on the lattice
that can be connected to the support of $B$ using at most $\ell$
steps on the lattice. We let $L^c$ denote the complement region of
$L$ that contains all the spins outside of $L$. An example of the
local ball $L$ for $\ell=3$ is given in \Fig{fig:Lregion}. 
\begin{figure}
  \includegraphics[scale=1.0]{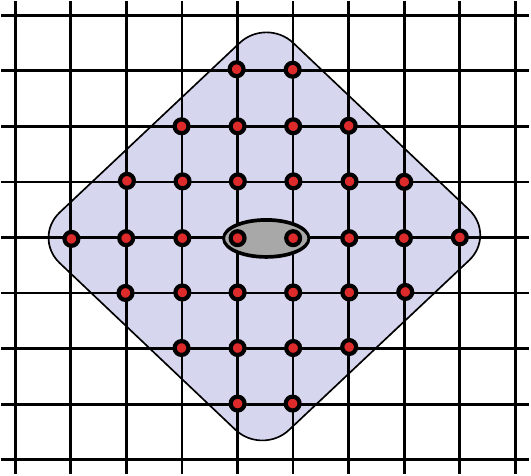}
  \caption{\label{fig:Lregion} An example of a ball $L$ of
  radius $\ell=3$ around a 2-local operator. The region $L$ consists
  of the $32$ spins that are marked by red dots.}
\end{figure}

Without loss of generality, we will assume that $B$ is normalized so
that $\norm{B}=1$; all of our results can be applied to the general
case with a simple rescaling.

Consider then the PEPS representation of $\ket{\gs}$, and let
$\alpha=(\alpha_1, \alpha_2, \ldots,)$ be the set of virtual indices
that connect the spins in $L$ to those outside of it. As illustrated in
\Fig{fig:Ialpha}, it defines the following decomposition of
$\ket{\gs}$: 
\begin{align}
  \ket{\gs} = \sum_\alpha \ket{O_\alpha}\otimes \ket{I_\alpha} \,.
\end{align}
Above, $\{\ket{O_\alpha}\}$, $\{\ket{I_\alpha}\}$ are states that are
defined \emph{by the PEPS tensor-network} outside and inside $L$
respectively.  The sets of vectors
$\{\ket{O_\alpha}\}$, $\{\ket{I_\alpha}\}$ are not necessarily
orthogonal between themselves, nor are they normalized, but we
assume that both $\{\ket{I_\alpha}\}$ and $\{\ket{O_\alpha}\}$ are
linearly independent among themselves. This is a stronger condition
than the injectivity condition for the PEPS, in which only
$\{\ket{I_\alpha}\}$ have to be linearly
independent\cc{ref:PVWC2008-UniquePEPS}, but as injectivity, we
expect it to hold for generic PEPS. We define $V_L\EqDef
\Sp\{\ket{I_\alpha}\}$ and let $\rho_L$ denote the reduced density
matrix of $\ket{\gs}$ on $L$, which lives in the subspace $V_L$.
Note that from our assumptions on $\{\ket{I_\alpha}\}$ and
$\{\ket{O_\alpha}\}$ it follows that $V_L = \mathrm{Im} \rho_L$. We
let $q\EqDef\dim V_L$, which is equal to the size of the range in
which the composite index $\alpha=(\alpha_1, \alpha_2, \ldots)$
runs. Notice that $q=D^{\orderof{|\partial L|}}$, reflecting the
fact that $\ket{\gs}$ satisfies an area law.
\begin{figure}
  \includegraphics[scale=0.9 ]{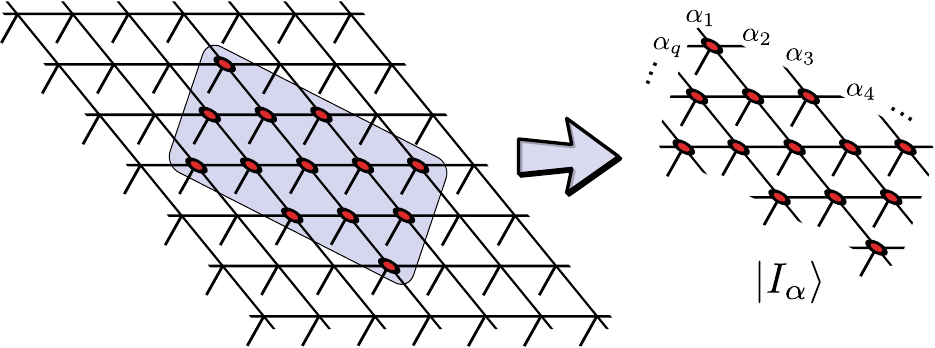}
    \caption{\label{fig:Ialpha} Illustration of the decomposition
    $\ket{\gs}=\sum_\alpha \ket{O_\alpha}\otimes\ket{I_\alpha}$,
    defined by a ball $L$ and the underlying PEPS. The boundary of
    $L$ cuts the PEPS virtual links $\alpha=(\alpha_1, \alpha_2, 
    \ldots, \alpha_q)$ and defines the outer states $\{
    \ket{O_\alpha}\}$ and the inner states $\{\ket{I_\alpha}\}$,
    which are described on the right part of the figure. The inner 
    vectors $\{\ket{I_\alpha}\}$ and the subspace $V_L$ that they span
    can be efficiently calculated.}
\end{figure}

Clearly, if we knew $\rho_L$ we could calculate $\av{B}$ from
$\av{B}=\bra{\gs}B\ket{\gs} = \Tr_L(\rho_LB)$. However, directly
estimating $\rho_L$ involves the contraction of the full network,
which is exactly what we are trying to avoid. Instead, since the
radius $\ell$ is assumed to be small, we can easily calculate
$V_L\EqDef \Sp\{\ket{I_\alpha}\}$.  Can we estimate $\av{B}$ using
only $V_L$ and the fact that $\ket{\gs}$ is the ground state of $H$?
Formally, we ask:
\begin{problem}
  Given a local observable $B$, a ball $L$ of radius $\ell$ around
  it, and the corresponding subspace $V_L$, find a range
  $[b_{min},b_{max}]$, as narrow as possible, such that
  $\av{B}\in[b_{min},b_{max}]$.
\end{problem}

We offer two algorithms to tackle this problem. The first, which we
call the `basic algorithm', is expected to work well when the system
is gapped, and is suitable mostly for frustration-free systems. The
second, which we call `the commutators gauge optimization' algorithm,
is much more powerful and applies only for frustrated systems. It
relies on the assumption that $\ket{\gs}$ with $D=\orderof{1}$ is
described by a PEPS and therefore satisfies an area law (this, in
turn, can also be attributed to the existence of a gap). We begin
with the basic algorithm.

\section{The basic algorithm}
\label{sec:basic}

Let $P_V$ be the projector into the subspace $V_L$, and define
$B_L\EqDef P_V B P_V$. Then as $\rho_L = P_V\rho_LP_V$, we get from
the cyclicity of the trace that $\av{B}=\Tr(B\rho_L) =
\Tr(BP_V\rho_LP_V) = \Tr(P_V B P_V \rho_L) = \Tr(B_L \rho_L)$.
Equivalently, $\av{B} = \bra{\gs} B_L\ket{\gs}$. Therefore, if $b_1
\le b_2 \le \ldots \le b_q$ are the eigenvalues of $B_L$ in the
subspace $V_L$, then $\av{B} \in [b_1, b_q]$. The basic algorithm
then simply calculates the largest and the smallest eigenvalues of
$B_L$ in the $V_L$ subspace and use these as lower- and upper-
bounds to $\av{B}$.

Intuitively, when the system is gapped, we expect the range $[b_1,
b_q]$ to narrow down as $\ell\to\infty$ for the following reason. 
Every $\alpha=(\alpha_1, \ldots, \alpha_q)$ can be seen as a
specific boundary condition for a restricted system on the ball $L$,
and as the system is gapped, we expect the effect of these boundary
conditions to be negligible. Hence, we expect that for every
$\ket{I_\alpha}$, we have $\bra{I_\alpha}
B\ket{I_\alpha}/\norm{I_\alpha}^2 \to \av{B}$ as $\ell\to\infty$,
and similarly for any linear combination of the $\ket{I_\alpha}$
vectors. In particular this means that the eigenvalues of $B_L$ are
expected to converge to $\av{B}$.  The following lemma shows that
this is indeed the case when the eigenvalues of $\rho_L$ are not too
small.
\begin{lemma}
\label{lem:expdec}
  Let $\lambda^2_{min}$ be the minimal eigenvalue of $\rho_L$, and
  let $b_1, \ldots, b_q$ be the eigenvalues of $B_L = P_V B P_V$ in
  the subspace $V_L$. Then for every $\beta=1, \ldots, q$,
  \begin{align}
    |b_\alpha -\av{B}| \le
      \frac{\norm{B_L}}{\lambda_{min}^2}e^{-\ell/\xi} \,,
  \end{align}
  where $\xi>0$ is the correlation length of $\ket{\gs}$.
\end{lemma}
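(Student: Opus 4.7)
The plan is to reduce the bound on $|b_\beta-\av{B}|$ to an exponentially small correlation in the ground state $\ket{\gs}$ between $B$ and an operator supported on $L^c$. For each unit eigenvector $\ket{u}\in V_L$ of $B_L$ with eigenvalue $b_\beta$, I would construct an operator $Y$ on $L^c$ such that, when $\ket{\gs}$ is partially contracted against $Y$, the resulting state on $L$ is proportional to $\ket{u}$. This makes the ratio $\bra{\gs}(B\otimes Y)\ket{\gs}/\bra{\gs}(I\otimes Y)\ket{\gs}$ equal to $b_\beta=\bra{u}B\ket{u}$, so exponential decay of correlations between $B$ and $Y$ immediately gives a small deviation from $\av{B}$.

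Concretely, I would use the Schmidt decomposition $\ket{\gs}=\sum_i\lambda_i\ket{e_i}_L\otimes\ket{f_i}_{L^c}$; then $V_L=\Sp\{\ket{e_i}\}$ and $\rho_L=\sum_i\lambda_i^2\ket{e_i}\bra{e_i}$ with $\lambda_i^2\ge\lambda_{min}^2$ for all $i$ in the sum. Writing $\ket{u}=\sum_i c_i\ket{e_i}$ with $\sum_i|c_i|^2=1$, I define
\begin{align}
  \ket{\phi}\EqDef\frac{1}{Z}\sum_i\frac{c_i^*}{\lambda_i}\ket{f_i},\qquad Z^2=\sum_i\frac{|c_i|^2}{\lambda_i^2}\le\frac{1}{\lambda_{min}^2}.
\end{align}
A direct computation shows $(I_L\otimes\bra{\phi})\ket{\gs}=\ket{u}/Z$, and thus with $Y\EqDef\ket{\phi}\bra{\phi}$, a rank-one projector of unit norm on $L^c$, one has
\begin{align}
  \bra{\gs}(I\otimes Y)\ket{\gs}=\frac{1}{Z^2},\qquad\bra{\gs}(B\otimes Y)\ket{\gs}=\frac{b_\beta}{Z^2}.
\end{align}

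To close the argument, $B$ has support near the center of $L$ and $Y$ is supported on $L^c$, so their supports are separated by distance at least $\ell$. Exponential decay of correlations in the gapped ground state then gives $|\bra{\gs}(B\otimes Y)\ket{\gs}-\av{B}\bra{\gs}(I\otimes Y)\ket{\gs}|\le\norm{B_L}e^{-\ell/\xi}$, and multiplying through by $Z^2\le 1/\lambda_{min}^2$ yields the stated bound.

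The construction of $\ket{\phi}$ is the key conceptual step: it inverts the Schmidt weights to isolate $\ket{u}$ on $L$, at the price of a normalization that can be as large as $1/\lambda_{min}^2$, which is exactly the factor in the denominator of the bound. The main obstacle I anticipate is verifying that $\ket{\phi}$ is well-defined, which requires $\lambda_{min}^2>0$, i.e.\ the linear independence of $\{\ket{O_\alpha}\}$ assumed in \Sec{sec:setup}. The main technical input that I would invoke as a black box is Hastings' exponential clustering theorem for gapped ground states of local Hamiltonians, with multiplicative constants absorbed into $\xi$.
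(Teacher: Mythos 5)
Your proposal is correct and is essentially the paper's own argument: both construct, from the Schmidt decomposition across $\partial L$, an operator on $L^c$ with inverted Schmidt weights (your $Y=\ketbra{\phi}{\phi}$ is the paper's $C_\beta=\ketbra{c_\beta}{c_\beta}\otimes\Id_L$ up to normalization) that isolates the eigenvector of $B_L$ on $L$, and then invoke exponential clustering; the normalization cost $Z^2\le 1/\lambda_{min}^2$ is exactly the paper's $\norm{C_\beta}=\sum_\gamma|U_{\beta\gamma}|^2/\lambda_\gamma^2\le 1/\lambda_{min}^2$. The only cosmetic difference is that you normalize $\ket{\phi}$ to unit norm up front, whereas the paper carries the norm of $C_\beta$ explicitly.
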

\begin{proof}
  The proof is a simple application of the basic idea in the
  Choi-Jamio\l kowski isomorphism, together with the exponential
  decay of correlation property of gapped ground
  states\cc{ref:Hastings2004-EXP}. Let $\ket{b_1},\ket{b_2},\ldots
  \ket{b_q}$ be the eigenvectors of $B_L$ in $V_L$ that correspond
  to the eigenvalues $b_1, b_2, \ldots, b_q$. They constitute an
  orthonormal basis of $V_L$. Let $\ket{\gs} = \sum_\alpha
  \lambda_\alpha \ket{\hat{O}_\alpha}\otimes \ket{\hat{I}_\alpha}$
  be the Schmidt decomposition of $\ket{\gs}$ with respect to $L$
  and $L^c$ such that $\rho_L = \sum_\alpha \lambda^2_\alpha
  \ket{\hat{I}_\alpha}\bra{\hat{I}_\alpha}$. Both
  $\{\ket{b_\beta}\}$ and $\{\ket{\hat{I}_\alpha}\}$ are orthonormal
  bases of $V_L$, and therefore they are connected by a unitary
  transformation: $\ket{b_\beta} = \sum_\alpha
  U_{\beta\alpha}\ket{\hat{I}_\alpha}$. We use it to define a set of
  (un-normalized) states on the spins of $L^c$:
  \begin{align}
  \label{eq:c-beta}
    \ket{c_\beta} 
      \EqDef \sum_\gamma \frac{1}{\lambda_\gamma}U^*_{\beta\gamma}
        \ket{\hat{O}_\gamma} \,.
  \end{align}
  In addition, we define the operator $C_\beta \EqDef
  \ket{c_\beta}\bra{c_\beta}\otimes \Id_L$, where $\Id_L$ is the
  identity operator on the spins of $L$. Using
  these definitions, the following identities are easy to verify:
  \begin{align}
  \label{eq:prop1}
    C_\beta \ket{\gs} &= \ket{c_\beta}\otimes\ket{b_\beta} \,, \\
  \label{eq:prop2}
    \norm{C_\beta} &= \norm{c_\beta}^2 
      = \sum_\gamma \frac{1}{\lambda^2_\gamma} 
        |U_{\beta\gamma}|^2 \,, \\
    C_\beta^2 &= \norm{c_\beta}^2 C_\beta \,.
  \label{eq:prop3}
  \end{align}
  
  Next, by using the exponential decay of
  correlations\cc{ref:Hastings2004-EXP} and the uniqueness of the
  ground state, we deduce that 
  \begin{align}
  \label{eq:decay}
    &\big|\bra{\gs}C_\beta B\ket{\gs}
      - \bra{\gs}C_\beta\ket{\gs}
          \cdot \bra{\gs}B\ket{\gs}\big| \\
       &\ \le \norm{B}\cdot\norm{C_\beta}\cdot e^{-\ell/\xi} \,.
       \nonumber
  \end{align}
  Let us analyze the above inequality term by term. First, by
  \Eq{eq:prop3} and the fact that $C_\beta$ and $B$ commute,
  $\bra{\gs}C_\beta B\ket{\gs} = \frac{1}{\norm{c_\beta}^2}
  \bra{\gs}C_\beta B C_\beta\ket{\gs}$. But then by \Eq{eq:prop1},
  this is equal to $\bra{b_\beta}B\ket{b_\beta} 
  =\bra{b_\beta}B_L\ket{b_\beta} = b_\beta$. A
  similar argument shows that $\bra{\gs}C_\beta\ket{\gs}=1$.
  Substituting this into the LHS of \eqref{eq:decay}, we get
  $|b_\beta-\av{B}| \le \norm{B}\cdot \norm{C_\beta} \cdot
  e^{-\ell/\xi}$. Finally, using \Eq{eq:prop2} and the fact that
  $\sum_\gamma |U_{\beta\gamma}|^2=1$ (since $U_{\beta\gamma}$ are
  the entries of a unitary matrix), proves the lemma.
\end{proof}

\Lem{lem:expdec} shows that if the smallest eigenvalue of $\rho_L$
is lowerbounded by $e^{-c\ell/\xi}$ for some $c<1$, the range $[b_1,
b_q]$ should converge to the point $\av{B}$ exponentially fast in
$\ell$. This is certainly expected in most 1D ground states that are
approximated by an MPS of a constant bond dimension $D$. However, it
can hardly happen in 2D, where even if the state is approximated by
a PEPS with $D=\orderof{1}$, the dimension of $V_L$ grows at least
as $D^{\orderof{\ell}}$, and consequently, the smallest eigenvalue
of $\rho_L$ is expected to fall off exponentially like
$D^{-\orderof{\ell}}$. In 3D things are even worse, since $\dim V_L$
grows like $D^{\orderof{\ell^2}}$.

When the underlying Hamiltonian is frustration-free, $V_L$ is a
subspace of the groundspace of $H_L$, the part of $H$ that is
supported on the spins of $L$. In such case, it is easy to see that
if the Hamiltonian satisfies the local topological order
(LTQO) property\cc{ref:MZ2013-LTQO1} (see also \Ref{ref:CMPS2013-LTQO2}),
 then necessarily $|b_\alpha-\av{B}|$ decays as a function
of $\ell$ (the rate of the decay depends on the particular
definition of LTQO). Related to that, \Lem{lem:expdec} has much in
common with Theorem~11 of \Ref{ref:CMPS2013-LTQO2}, which shows that
parent Hamiltonians of translationally invariant, injective MPS
satisfy LTQO.

We conclude this section with a somewhat stronger result than
\Lem{lem:expdec} that holds for frustration-free systems:
\begin{lemma}
\label{lem:FF-AGSP}
  When $\ket{\gs}$ is the unique ground state of a frustration-free
  Hamiltonian with an $\orderof{1}$ spectral gap, then 
  \begin{align}
  \label{eq:AGSP-ineq}
    B_L\ket{\gs} = \av{B}\ket{\gs} + \ket{\delta} \,,
  \end{align}
  where $\norm{\ket{\delta}}\le e^{-\orderof{\ell}}$. In other
  words, $\ket{\gs}$ is an approximate eigenvector of $B_L$.
\end{lemma}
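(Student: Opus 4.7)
The proof strategy I propose has three stages: (i) use frustration-freeness to identify $P_V$ with a local ground-state projector, (ii) rewrite $\ket{\delta}$ as a one-sided projection of $(B-\av{B})\ket{\gs}$, and (iii) bound that projection via a local topological order (LTQO) estimate.

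For (i), let $\hat H_L\EqDef\sum_{i:\,\supp h_i\subset L} h_i$. Frustration-freeness of $H$ gives $\hat H_L\ket{\gs}=0$; expanding $\ket{\gs}=\sum_\alpha\ket{O_\alpha}\otimes\ket{I_\alpha}$ and using the linear independence of $\{\ket{O_\alpha}\}$ assumed in the paper, I deduce $\hat H_L\ket{I_\alpha}=0$ for every $\alpha$, so $V_L\subseteq\ker(\hat H_L)$. Under the paper's stronger injectivity-like hypothesis (both $\{\ket{I_\alpha}\}$ and $\{\ket{O_\alpha}\}$ linearly independent), the reverse inclusion is also expected once $\ell$ exceeds an injectivity length; taking this equality, $P_V$ coincides with the ground-state projector $\Pi_L$ of $\hat H_L$, and in particular $\Pi_L\ket{\gs}=\ket{\gs}$.

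For (ii), since $P_V\ket{\gs}=\ket{\gs}$,
\begin{equation*}
\ket{\delta}=B_L\ket{\gs}-\av{B}\ket{\gs}=\Pi_L\bigl(B-\av{B}\bigr)\ket{\gs},
\end{equation*}
so the task reduces to showing $\|\Pi_L(B-\av{B})\ket{\gs}\|\le e^{-\orderof{\ell}}$. For (iii), I invoke the LTQO bound $\|\Pi_L B\Pi_L-\av{B}\Pi_L\|\le e^{-\orderof{\ell}}$: writing $\Pi_L B\Pi_L=\av{B}\Pi_L+E$ with $\|E\|\le e^{-\orderof{\ell}}$ and using $\Pi_L\ket{\gs}=\ket{\gs}$ gives $\Pi_L(B-\av{B})\ket{\gs}=E\ket{\gs}$, hence $\|\ket{\delta}\|\le\|E\|\le e^{-\orderof{\ell}}$.

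The main obstacle is the LTQO bound in step (iii): one must argue that $\Pi_L$, the projector onto a possibly exponentially large ground space of $\hat H_L$, nevertheless acts uniformly on any local observable sitting deep inside $L$. My plan is to prove it via the detectability lemma of Aharonov--Arad--Landau--Vazirani, constructing a $\Theta(\ell)$-layer AGSP from the local terms of $\hat H_L$ that approximates $\Pi_L$ to precision $e^{-\orderof{\ell}}$; the $\orderof{1}$ gap of $H$ together with locality is expected to yield a sufficient gap for this AGSP construction. Crucially, unlike in \Lem{lem:expdec}, this route never inverts Schmidt coefficients of $\rho_L$, so the $1/\lambda_{min}^2$ prefactor of that earlier result does not appear. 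For the injective MPS/PEPS parent-Hamiltonian setting relevant to the paper, an LTQO bound of exactly this form is already proved in \Ref{ref:CMPS2013-LTQO2}, and that argument can be imported essentially verbatim.
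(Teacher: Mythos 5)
Your steps (i)--(ii) are sound up to one overstatement: you neither need nor can easily justify the equality $V_L=\ker(\hat H_L)$. The inclusion $V_L\subseteq\ker(\hat H_L)$, which you correctly derive from frustration-freeness and the linear independence of $\{\ket{O_\alpha}\}$, already gives $P_V=P_V\Pi_L$ and hence $\norm{\ket{\delta}}=\norm{P_V\Pi_L(B-\av{B})\ket{\gs}}\le\norm{\Pi_L(B-\av{B})\ket{\gs}}$, which is all you use afterwards. The genuine gap is step (iii). The LTQO-type estimate $\norm{\Pi_L B\Pi_L-\av{B}\Pi_L}\le e^{-\orderof{\ell}}$ is essentially the entire content of the lemma (and in operator norm it is strictly more: it asserts that $B$ acts uniformly on \emph{all} of the generally high-dimensional local ground space, not just on $\ket{\gs}$), and your plan for proving it does not go through as stated. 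The detectability lemma applied to the truncated Hamiltonian $\hat H_L$ yields $\DL^{\ell'}(\hat H_L)\approx\Pi_L$ only if $\hat H_L$ has an $\orderof{1}$ spectral gap above its ground space; the hypothesis only provides a gap for the global $H$, and local gaps do not follow from global ones. Moreover, even granting $\DL^{\ell'}(\hat H_L)\approx\Pi_L$, this says nothing about whether the expectation of $B$ is the same (namely $\av{B}$) in every state of $\ker(\hat H_L)$ --- that uniformity is precisely what LTQO asserts, and the paper itself treats LTQO as an \emph{additional} property rather than a consequence of the gap. Finally, Theorem~11 of \Ref{ref:CMPS2013-LTQO2} covers only translationally invariant, injective MPS parent Hamiltonians, so it cannot be imported verbatim for the general frustration-free gapped $H$ (possibly in 2D) assumed here.

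The paper's proof sidesteps all of this by applying the detectability lemma to the \emph{global} gapped $H$: since $\ket{\gs}$ is unique, $\DL^{\ell'}(H)$ approximates the rank-one projector $\ketbra{\gs}{\gs}$ to accuracy $e^{-\orderof{\ell}}$, and a light-cone ``pulling'' argument --- each local projector $P_i$ in $\DL^{\ell'}(H)$ is absorbed either into $P_V$ on the left (those in the causal cone of $B$, using $P_VP_i=P_V$, which is your inclusion $V_L\subseteq\ker(\hat H_L)$) or into $\ket{\gs}$ on the right (the rest, which commute with $B$) --- gives the exact identity $P_VB\ket{\gs}=P_V\DL^{\ell'}(H)B\ket{\gs}$ for a suitable $\ell'=\orderof{\ell}$. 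Substituting $\DL^{\ell'}(H)=\ketbra{\gs}{\gs}+e^{-\orderof{\ell}}$ then finishes the proof using only the stated hypotheses. To salvage your route you would have to either add LTQO (or a local-gap condition on $\hat H_L$) as an extra assumption, or replace step (iii) by this global-DL light-cone argument.
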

The proof of this lemma is given in the Appendix. Formally, it is at
least as strong as \Lem{lem:expdec} because it can be used to derive
\Lem{lem:expdec} when the system is frustration-free. Indeed, using
the notation of the proof of \Lem{lem:expdec}, the lemma can be
proved by multiplying inequality~\eqref{eq:AGSP-ineq} by $C_\beta$
and then by $\bra{c_\beta}\otimes\bra{\beta_\beta}$. But it also
seems stronger since it tells us something about the eigenvalues of
$B_L$ even when $\lambda_{min}^2$ is much smaller then
$e^{-\ell/\xi}$. To see this, note that \eqref{eq:AGSP-ineq} implies
that 
\begin{align*}
  \norm{(B_L-\av{B}\Id)^2\ket{\gs}}^2
    = \norm{\delta}^2\le e^{-\orderof{\ell}}\,.
\end{align*}
If $\{\ket{b_\beta}\}$ is the eigenbasis of $B_L$ with eigenvalues
$b_1, b_2, \ldots$, then the above inequality can be written as
\begin{align}
\nonumber
  &\norm{(B_L-\av{B}\Id)^2\ket{\gs}}^2 
    = \Tr_L\big[\rho_L(B_L-\av{B}\Id)^2\big] \\
  &= \sum_\beta \bra{b_\beta}\rho_L\ket{b_\beta}(b_\beta-\av{B})^2
   \le  e^{-\orderof{\ell}} \,.
\end{align}
As $\{\bra{b_\beta}\rho_L\ket{b_\beta}\}$ is the probability
distribution that corresponds to the measurement of $B_L$, we can
use Markov inequality to deduce that if we measure $B_L$ on
$\ket{\gs}$, then with probability of at least
$1-e^{-\orderof{\ell}}$ we will obtain an eigenvalue of $B_L$ that
is $e^{-\orderof{\ell}}$ close to $\av{B}$. This holds regardless of
the minimal eigenvalue of $\rho_L$. Therefore in cases where the
weight of all the eigenvalues of $B_L$ in $\rho_L$ is of the same
order (which can be much much smaller than $e^{-\orderof{\ell}}$ in
3D), then an exponentially large fraction of them must be
exponentially close to $\av{B}$. This does not prove that the range
$[b_{min}, b_{max}]$ rapidly shrinks, but it supports the intuition
that this should generally be the case.

We conclude this section noting that for frustrated systems a
similar result to \Lem{lem:FF-AGSP} can be proved using the same
techniques that are used to prove the exponential decay of
correlations in the frustrated case\cc{ref:Hastings2004-EXP}: a
combination of Lieb-Robinson bounds and a suitable filtering
function. In fact, a very similar result was already proven by
Hastings in \Ref{ref:Hastings2006-GappedH} for the special case when
$B$ is a local Hamiltonian term. However, instead of pursuing that
direction, we turn to a different algorithm, which turns out to be
much more powerful for our problem.

\section{The commutator gauge optimization}
\label{sec:CGO}

In this section we introduce the Commutator Gauge Optimization
algorithm (CGO for short), which is applicable for frustrated
systems. As we shall see, it can be viewed as pair of primal-dual
SDP optimization problem. We start with the formulation of the
primal optimization problems.

%
%
\subsection{Primal problem}

Our starting point is the simple observation of the basic algorithm,
that the expectation value $\av{B}$ must be inside eigenvalues range
$[b_{min}, b_{max}]$ of the operator $B_L\EqDef P_V B P_V$. The additional
idea is that this must hold \emph{for any other operator $K$ for
which $\av{K}=\av{B}$}. Therefore, if we can find an operator $K$
such that $\av{B}=\av{K}$, then necessarily $\av{B}$ must also be
inside the range $[k_{min}, k_{max}]$ of the minimal and maximal
eigenvalues of $P_V K P_V$. By optimizing over a subset of such
operators, we may significantly narrow down the range in which
$\av{B}$ is found. We therefore look for operators, $K_{min},
K_{max}$ on $L$ such that
\begin{align}
\label{eq:main}
  \bra{\gs}K_{min}\ket{\gs} = \bra{\gs}K_{max}\ket{\gs} 
    = \bra{\gs}B\ket{\gs} \,,
\end{align}
but at the time, $P_L K_{max} P_L$ has the smallest possible maximal
eigenvalue, and $P_L K_{min} P_L$ has the largest possible minimal
eigenvalue. 

How can we find such operators? A very simple yet
powerful trick is to look for operators of the form:
\begin{align}
  K_{min} &= B + [H, A_{min}] , &
  K_{max} &= B + [H, A_{max}] \,,
\end{align}
where $A_{min}, A_{max}$ are anti-Hermitian such that $K_{min},
K_{max}$ are Hermitian.  For any eigenstate $\ket{\eps}$ of $H$ and
for any operator $A$, it is easy to verify that
$\bra{\eps}[H,A]\ket{\eps}=0$ and therefore \Eq{eq:main} holds. In
that sense $K_{min}$ and $K_{max}$ can be viewed as a different
``gauges'' of $B$, which have the same expectation value with respect to
$\ket{\gs}$.

By taking the support of $A_{min}, A_{max}$ to be small enough, we
can guarantee that $[H, A_{min}]$ and $[H, A_{max}]$ are supported
in $L$. Formally, we partition the spins of $L$ into two disjoint
subsets, $L=\partial L \cup L_0$. Here, $\partial L$ contains the
spins in $L$ that are coupled to spins in $L^c$ via one or more
local terms in $H$, and $L_0$ contains the rest of the spins in $L$.
An illustration of this decomposition is given in
\Fig{fig:L0region}. Letting $H_L$ denote the sum of all the local
$h_i$ terms whose support is inside $L$, we conclude that when the
support of $A$ is inside $L_0$, then $[H,A] = [H_L,A]$, and this
commutator is supported inside $L$.
\begin{figure}
  \includegraphics[scale=1.0]{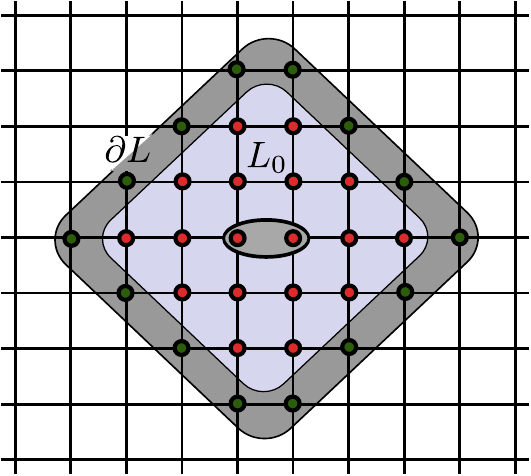}
  \caption{\label{fig:L0region} The same $\ell=3$ ball $L$ from
  \Fig{fig:Lregion}, now decomposed into the boundary region
  $\partial L$ (dark gray, green vertices) and the internal region
  $L_0$ (light gray, red vertices). }
\end{figure}

For a given Hermitian operator $O$, denote by $\lambda_{min}(O),
\lambda_{max}(O)$ its minimal/maximal eigenvalues. Then the CGO
algorithm can be summarized by the following optimization problem:
\begin{problem}[The primal CGO optimization]
\label{def:P-CGO}
  Given a local observable $B$, a ground state $\ket{\gs}$ of a
  local Hamiltonian $H$ in the form of a PEPS, and a ball $L$ of
  radius $\ell$ around $B$, calculate
  \begin{align}
  \label{eq:Pmax}
    k^{(P)}_{max} &\EqDef \min_{A} 
      \lambda_{max} \big( P_V(B + [H_L, A])P_V\big) \\
    \text{and}\nonumber \\  
  \label{eq:Pmin}
    k^{(P)}_{min} &\EqDef \max_{A} 
      \lambda_{min} \big( P_V(B + [H_L, A])P_V\big)\,.
  \end{align}  
  Above, the optimizations are over all anti-Hermitian operators $A$
  that are supported on $L_0$.
\end{problem}
Two notes are in order. First, the presence of $P_V$ in the above
equations is crucial, as it is the only place where information
about the ground state is entering the problem. Without it, values
from other eigenstates of $H$ might worsen our estimate. It is
therefore clear why some sort of an area-law is required for CGO to
work; if the support of $\rho_L$ (i.e., the subspace $V$) is the
entire available Hilbert space, there would be nothing in $V_L$ to
tell us that we are dealing with the ground state and not, say, some
other eigenstate of $H$.

Second, when $H$ is frustration free, $P_V H_L=H_LP_V =0$ and so
$P_V(B + [H_L, A])P_V = P_V B P_V$ for all $A$; the CGO algorithm then
simply becomes the basic algorithm.

Before discussing how the optimization Problem~\ref{def:P-CGO} can
be done in practice, let us introduce its dual.

%
%
\subsection{The dual problem}

To introduce the dual problem, we begin with the constraint 
$\bra{\gs}[H_L, A]\ket{\gs}=0$, which holds for every $A$ that is
supported on the spins in $L_0$.  Tracing out the spins in $L^c$,
we get
\begin{align}
  \Tr(\rho_L [H_L, A]) = 0 \quad \forall A \in \supp(L_0) \,.
\end{align}
Using the identity $\Tr(A[B,C])=\Tr(C[A,B])$, we arrive at $\Tr(A
[\rho_L, H_L]) = 0$. Writing $\Tr = \Tr_{L_0} \Tr_{\partial L}$, the
last equation can be rewritten as $\Tr_{L_0}A\big( \Tr_{\partial
L}[\rho_L, H_L]\big) = 0$. But since $A$ is an arbitrary
anti-Hermitian operator on $L_0$, it implies that the expression
it multiplies has to vanish. We therefore reach the
following corollary:
\begin{corollary}
\label{cor:rhoL-constraint}
  Let $\rho_L$ be the reduced density matrix of an eigenstate of $H$ in
  the region $L$. Then, 
  \begin{align}
    \label{eq:rhoL-constraint}
    \Tr_{\partial L} [\rho_L, H_L] = 0 \,.
  \end{align}  
\end{corollary}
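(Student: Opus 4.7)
The plan is to formalize the computation sketched in the text preceding the corollary. I would proceed in three stages.

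First, I would show that for any operator $A$ supported strictly inside $L_0$ one has $[H, A] = [H_L, A]$ and that this commutator is supported on $L$. This uses the nearest-neighbor structure of $H$: only local terms $h_i$ that overlap the support of $A$ contribute to $[H, A]$, and since $A$ sits at distance at least one from $\partial L$, every such term already lies inside $H_L$ (and within $L$). Combined with the fact that $\ket{\gs}$ is an eigenstate of $H$, which implies $\bra{\gs}[H,A]\ket{\gs}=0$, this yields $\bra{\gs}[H_L, A]\ket{\gs} = 0$, which by tracing out $L^c$ becomes $\Tr_L\bigl(\rho_L[H_L, A]\bigr) = 0$.

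Second, I would manipulate this trace identity using the cyclicity-derived rule $\Tr(X[Y,Z]) = \Tr(Z[X,Y])$ to obtain $\Tr_L\bigl(A[\rho_L, H_L]\bigr) = 0$. Writing $\Tr_L = \Tr_{L_0}\Tr_{\partial L}$ and using that $A$ acts trivially on $\partial L$, I pull $A$ outside the inner partial trace to get $\Tr_{L_0}\bigl(A \cdot \Tr_{\partial L}[\rho_L, H_L]\bigr) = 0$ for every anti-Hermitian $A$ supported on $L_0$.

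The final step is to deduce that $M \EqDef \Tr_{\partial L}[\rho_L, H_L]$ must vanish. The key observation is that $M$ is itself anti-Hermitian: $\rho_L$ and $H_L$ are Hermitian so $[\rho_L, H_L]$ is anti-Hermitian, and partial trace preserves this property. Substituting the (admissible) choice $A = M$ into the identity from step two yields $\Tr_{L_0}(M^2) = -\Tr_{L_0}(M^\dagger M) = -\|M\|_{\mathrm{HS}}^2 = 0$, forcing $M = 0$. The entire argument is essentially algebraic; the only point that needs any care is the last one, namely that the family of anti-Hermitian test operators is rich enough to conclude vanishing of the (anti-Hermitian) operator $M$, which is handled cleanly by the $A = M$ substitution.
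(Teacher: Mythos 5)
Your proposal is correct and follows essentially the same route as the paper: establish $\Tr\bigl(\rho_L[H_L,A]\bigr)=0$ for $A$ supported on $L_0$, convert it via the cyclicity identity to $\Tr_{L_0}\bigl(A\,\Tr_{\partial L}[\rho_L,H_L]\bigr)=0$, and conclude that the multiplied expression vanishes. Your final step --- noting that $M=\Tr_{\partial L}[\rho_L,H_L]$ is itself anti-Hermitian and substituting $A=M$ to get $-\norm{M}_{\mathrm{HS}}^2=0$ --- is a welcome tightening of the last inference, which the paper states without justification.
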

The above \emph{local} identity holds for all eigenstates of $H$ and
for all regions $L$, regardless of any assumption about a gap or the
existence of a PEPS description. When $H$ is frustration free and
$\rho_L$ is the reduced density matrix of the ground state, it is
satisfied trivially since $[\rho_L, H_L]=0$. However, when the
system is frustrated, \Eq{eq:rhoL-constraint} provides many
non-trivial constraints as it holds \emph{pointwise} on the Hilbert
space of $L_0$. 

Corollary~\ref{cor:rhoL-constraint} allows us to formulate a dual
to the optimization problem given in Problem~\ref{def:P-CGO}.
Instead of optimizing over anti-Hermitian operators $A$, we may
optimize over `legal' reduced density matrices, which satisfy
\Eq{eq:rhoL-constraint}. Formally, 
\begin{problem}[The dual CGO problem]
  Given a local observable $B$, a ground state $\ket{\gs}$ of a
  local Hamiltonian $H$ in the form of a PEPS, and a ball $L$ of
  radius $\ell$ around $B$, calculate
  \begin{align}
  \label{eq:Dmax}
    k^{(D)}_{max} &\EqDef \max_{\rho_L\in S_L} \Tr(\rho_L B) \\
    \text{and}\nonumber \\  
  \label{eq:Dmin}
    k^{(D)}_{min} &\EqDef \min_{\rho_L\in S_L} \Tr(\rho_L B) \,.
  \end{align}  
  Above, $S_L$ is the set of all density matrices $\rho_L$ in $V_L$
  [i.e., $\rho_L\succeq 0$, $\Tr(\rho_L)=1$] that satisfy
  \Eq{eq:rhoL-constraint}.
\end{problem}

One can prove directly that $k^{(P)}_{max} \ge k^{(D)}_{max}$ and
that $k^{(P)}_{min} \le k^{(D)}_{min}$. We omit the proof since it
will follow naturally from viewing these two optimizations as dual
semi-definite programs (SDPs), as we shall now do.

%
%
\subsection{The commutator gauge optimization as a SDP}

The primal CGO problem and its dual can be cast as semi-definite
programs. Let us demonstrate it for the upperbound of $\av{B}$; the
lowerbound follows similarly. To write \eqref{eq:Pmax} as an SDP,
note that a number $\lambda$ is an upperbound to all eigenvalues of
an operator $O$ iff $\lambda\Id - O \succeq 0$, where $\succeq 0$
stands for non-negative matrix. Then choosing a basis $A_1, A_2,
\ldots, A_m $ for all the anti-Hermitian matrices in $L_0$, the optimization
over~\eqref{eq:Pmax} can be written as 
\begin{align}
\label{eq:SDP-P}
\text{minimize:\quad } & b \\
  \text{subject to:\quad} &b\Id 
    - \sum_{i=1}^m c_i P_V[H_L, A_i]P_V  - P_VBP_V \succeq 0
\end{align}
Above, the optimization is done over the real numbers $b, c_1, c_2,
\ldots, c_m$. The
dual SDP is then
\begin{align}
\label{eq:SDP-D}
  \text{maximize:\quad} &\Tr(\rho_L P_VBP_V) \\
  \text{subject to:\quad} &\rho_L\succeq 0, \ 
  \Tr(\rho_L\Id) = 1 , \\
   &\text{and} \quad
   \Tr(\rho_L P_V[H_L, A_i] P_V) = 0\ \ \forall i\nonumber
\end{align}
which is identical to the optimization in \eqref{eq:Dmax}. By the
weak duality of semi-definite programming, we note that the value in
\eqref{eq:SDP-D} lowerbounds the one in \eqref{eq:SDP-P}, and that
in many cases they are identical. In the next section, we present
the results of some numerical tests we performed on 1D systems to
check the performance of such SDPs.

%
%
\subsection{Working with constant bond dimensions}
\label{sec:constant-D}

Up to now, we have assumed that the ground state $\ket{\gs}$ is
given \emph{exactly} as a PEPS with constant bond dimension $D$.
However, in virtually all frustrated systems, a constant $D$ can
only give an approximation to the ground state. To distinguish the
actual PEPS approximation from the exact ground state, we will
denote it by $\ket{\peps}$. Given that $\ket{\peps}$ is only an
approximation to the ground state $\ket{\gs}$, we can only expect 
$\bra{\peps}[H,A]\ket{\peps}$ to be close to zero, but not
completely vanish. Similarly, \Eq{eq:rhoL-constraint} is expected to
be only approximately satisfied. While this may look merely as an
aesthetic defect, it raises a conceptual problem: if $D$ is
constant, the dimension of $V_L$ increases like
$D^{\orderof{\ell}}$, while the number of constraints in
\Eq{eq:rhoL-constraint} goes like $d^{\orderof{\ell^2}}$. For large
enough, yet constant $\ell$, it is expected that
\Eq{eq:rhoL-constraint} is over-determined, or simply unsatisfiable.
From an SDP point of view, we expect the constraints in the primal
and the dual problems to become linearly dependent, causing the dual
problem not to have any feasible solution and the primal problem to
yield infinities. 

A practical (though probably not optimal) solution to this is to use
only a \emph{subset} of all possible $A_i$ in the SDP procedure.
Note that this only relaxes the program in \Eq{eq:SDP-D} (and
corresponding program for minimum eigenvalue) and hence increases
the range $[k^{(D)}_{max},k^{(D)}_{min}]$.  One way to do it is to
create an increasing list of random $A_i$ matrices and feed it to
the SDP until they become linearly dependent. As long as the
matrices are not linearly dependent and are of a comparable norm, we
expect the solution of SDPs~\eqref{eq:SDP-P} and~\eqref{eq:SDP-D} with
$P_V$ taken from $\ket{\peps}$ to produce results that are close to
the exact case. The 1D numerical tests that we present in the
following section support this intuition.

\section{Numerical tests}
\label{sec:numerics}

Due to the reduced computational cost, we performed all of our tests
on 1D systems whose ground states are described by MPS.
Nevertheless, we strongly believe that our conclusions hold also for
2D systems, which we leave for future research. 

We analyzed four well-known systems, which we label System-A,
System-B, System-C, and System-D. All systems are defined over
$N=100$ spins with open boundary conditions. The underlying
Hamiltonians are nearest-neighbors with unique ground states and an
$\orderof{1}$ spectral gap and correlation lengths which were
calculated numerically.  The full details are as follows:

\paragraph*{System-A - 1D AKLT.} This is a frustration-free system
  on spin one particles ($d=3$)\cc{ref:AKLT1987, ref:AKLT1988} given
  by
  \begin{align}
    H_{sysA}\EqDef \sum_{i=1}^{N-1}\big[ 
      \frac{1}{2}\bm{S}_i\cdot\bm{S}_{i+1} 
        + \frac{1}{6}(\bm{S}_i\cdot\bm{S}_{i+1})^2
        + \frac{1}{3}\big] \,.
  \end{align}
  The ground state is known analytically, and is described by the
  AKLT valence bond state, which can be written as a MPS with bond
  dimension $D=2$. The spectral gap in the thermodynamic limit is
  $\Delta_A\simeq 0.35$\cc{ref:Knabe1988-AKLTgap}. The correlation
  length is $\xi_A\simeq 0.9$.

\paragraph*{System-B - Transverse Ising model.} This is the classical
  Ising model equipped with a transverse magnetic field in the
  $\hat{z}$ direction:
  \begin{align}
    H_{sysB}\EqDef -\frac{1}{2\sqrt{1+h^2}}\left[
      \sum_{i=1}^{N-1} S^x_iS^x_{i+1} 
        + h\sum_{i=1}^N S^z_i\right] \,.
  \end{align}
  At $h=1.0$ the model experiences a phase-transition and the gap
  closes down. We used $h=1.1$ for which the gap is $\Delta_B\simeq
  0.07$ and the correlation length is $\xi_B\simeq 8.8$.

\paragraph*{System-C - Transverse XY model.} This model resembles 
  the transverse Ising model, but with an additional
  $S^y_iS^y_{i+1}$ interaction term:
  \begin{align}
    H_{sysC}&\EqDef -\frac{1}{2\sqrt{1+h^2}}\left[
      \sum_{i=1}^{N-1} \frac{1-\alpha}{2}S^x_iS^x_{i+1}\right. \\
       &+ \left.\frac{1+\alpha}{2}S^y_iS^y_{i+1}
        + h\sum_{i=1}^N S^z_i\right] \,. \nonumber
  \end{align}
  We used $h=1.1$ and $\alpha=0.5$, for which the gap was evaluated
  numerically to be $\Delta_C\simeq 0.07$ and the correlation length
  $\xi_C\simeq 4.3$.
    
\paragraph*{System-D - XY model with random field.} Just like
  System-C, only that here the transverse field at spin $i$ is given
  by $h_i$, which is a uniformly distributed random number in the
  range $[1.05,1.15]$. 
  \begin{align}
    H_{sysC}&\EqDef -\frac{1}{2\sqrt{1+h^2}}\left[
      \sum_{i=1}^{N-1} \frac{1-\alpha}{2}S^x_iS^x_{i+1}\right. \\
       &+ \left.\frac{1+\alpha}{2}S^y_iS^y_{i+1}
        + h_i\sum_{i=1}^N S^z_i\right] \,. \nonumber
  \end{align}
  For $h=1.1,\alpha=0.5$ and $h_i\sim [1.05,1.15]$, the gap was
  evaluated numerically to be $\Delta_D\simeq 0.06$. The average
  correlation length was just like as in System-C.

\begin{figure*}[t]
  \caption{\label{fig:sysD-PxPx-converge} The progress of the 
    SDP bounds for the
    $\ell=3$ (dashed black lines) and $\ell=4$ (solid blue lines)
    lines for System-D (XY model with random transverse field) with
    the observable $B=P_xP_x$. The exact expectation value is
    $\av{B}=0.19999$. The $\ell=3$ algorithm stopped after using
    $684$ and $706$ matrices for the upperbound and lowerbounds
    respectively, yielding the bounds $[0.199968,0.200913]$. The
    $\ell=4$ algorithm stopped after using $684$ matrices in both
    upper/lower bounds, yielding the bounds $[0.19999,0.200067]$. }    
  \includegraphics[scale=1]{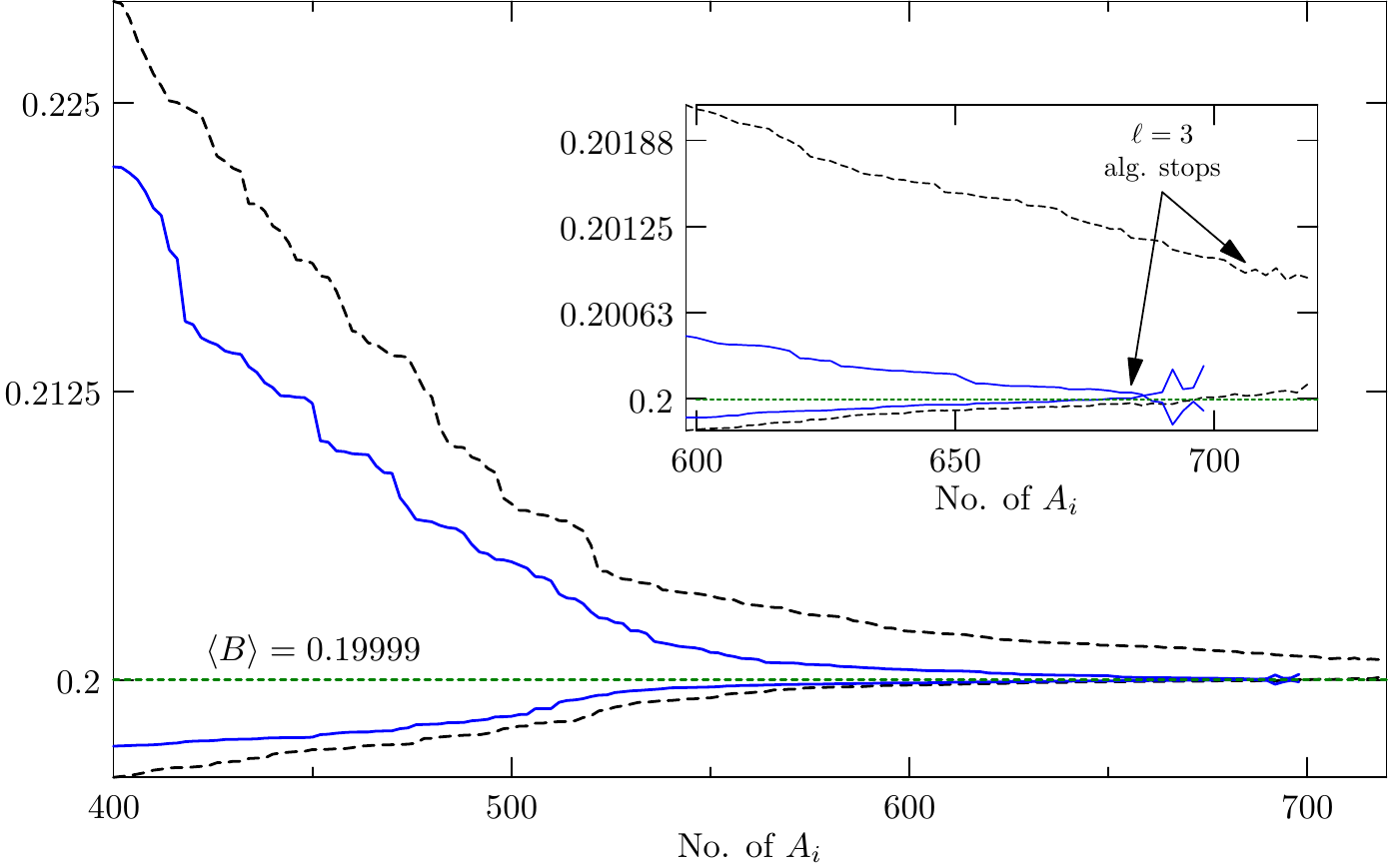}
\end{figure*}

\begin{widetext}
  
\begin{table}[h]
\caption{\label{tab:SysA}%
Numerical results for System-A, the 1D AKLT model.}
\begin{ruledtabular}
\begin{tabular}{cccc}
 Obs &
 \multicolumn{1}{c}{\textrm{exact $\av{B}$}}&
 \multicolumn{1}{c}{\textrm{basic $\ell=3$}}&
 \multicolumn{1}{c}{\textrm{basic $\ell=4$}} \\
\hline \\
  $\textrm{Random}_1$ &  $0.19532$ &
  $[0.19027, 0.20074]\ (0.005) $ &
  $[0.19360, 0.19708]\ (0.002) $ \\
  $\textrm{Random}_2$ &  $0.23058$ &
  $[0.22024, 0.24134]\ (0.01) $ &
  $[0.22709, 0.23412]\ (0.003) $ \\
  $\textrm{Random}_3$ &  $0.27338$ &
  $[0.26209, 0.28511]\ (0.01) $ &
  $[0.26957, 0.27724]\ (0.004) $ 
\end{tabular}
\end{ruledtabular}
\end{table}

\begin{table}[h]
\caption{\label{tab:SysB}%
Numerical results for System-B, the transverse Ising model.}
\begin{ruledtabular}
\begin{tabular}{cccccc}
 Obs &
 \multicolumn{1}{c}{\textrm{exact $\av{B}$}}&
 \multicolumn{1}{c}{\textrm{basic $\ell=3$}}&
 \multicolumn{1}{c}{\textrm{CGO $\ell=3$}}&
 \multicolumn{1}{c}{\textrm{basic $\ell=4$}}&
 \multicolumn{1}{c}{\textrm{CGO $\ell=4$}}\\
\hline \\
  $P_x P_x$ &  $0.38258$ &
  $[0.00197, 0.92964]\ (0.46) $ &
  $[0.37965, 0.38564]\ (0.003) $ &
  $[0.00697, 0.89285]\ (0.44) $ &
  $[0.38189, 0.38284]\ (0.0005) $ \\
  $P_z P_z$ &  $0.79166$ &
  $[0.24231, 0.94085]\ (0.35) $ &
  $[0.79089, 0.79101]\ (6\times 10^{-5})$ &
  $[0.40266, 0.91528]\ (0.25) $ &
  $[0.79166, 0.79234]\ (0.0003) $ \\
  Random &  $0.14391$ &
  $[0.00484, 0.46019]\ (0.23) $ &
  $[0.14376, 0.14377]\ (4\times 10^{-6})$ &
  $[0.01226, 0.33495]\ (0.16) $ &
  $[0.14384, 0.14414]\ (0.0001) $
\end{tabular}
\end{ruledtabular}
\end{table}

\begin{table}[h]
\caption{\label{tab:SysC}%
Numerical results for System-C, the transverse XY model.}
\begin{ruledtabular}
\begin{tabular}{cccccc}
 Obs &
 \multicolumn{1}{c}{\textrm{exact $\av{B}$}}&
 \multicolumn{1}{c}{\textrm{basic $\ell=3$}}&
 \multicolumn{1}{c}{\textrm{CGO $\ell=3$}}&
 \multicolumn{1}{c}{\textrm{basic $\ell=4$}}&
 \multicolumn{1}{c}{\textrm{CGO $\ell=4$}}\\
\hline \\
  $P_x P_x$ &  $0.20054$ &
  $[0.01286, 0.71057]\ (0.35) $ &
  $[0.19856, 0.20321]\ (0.002) $ &
  $[0.03376, 0.58240]\ (0.27) $ &
  $[0.20037, 0.20059]\ (0.0001) $ \\
  $P_z P_z$ &  $ 0.90343$ &
  $[0.21372, 0.98552]\ (0.39) $ &
  $[0.89114, 0.91025]\ (0.009)$ &
  $[0.38477, 0.97288]\ (0.29) $ &
  $[0.90312, 0.90416]\ (0.0005) $ \\
  Random &  $0.27120$ &
  $[0.02852, 0.57259]\ (0.27) $ &
  $[0.26849, 0.27271]\ (0.002)$ &
  $[0.06861, 0.48125]\ (0.20) $ &
  $[0.27111, 0.27131]\ (0.0001) $
\end{tabular}
\end{ruledtabular}
\end{table}

\begin{table}[h]
\caption{\label{tab:SysD}%
Numerical results for System-D, the XY model with random field.}
\begin{ruledtabular}
\begin{tabular}{cccccc}
 Obs &
 \multicolumn{1}{c}{\textrm{exact $\av{B}$}}&
 \multicolumn{1}{c}{\textrm{basic $\ell=3$}}&
 \multicolumn{1}{c}{\textrm{CGO $\ell=3$}}&
 \multicolumn{1}{c}{\textrm{basic $\ell=4$}}&
 \multicolumn{1}{c}{\textrm{CGO $\ell=4$}}\\
\hline \\
  $P_x P_x$ &  $0.19999$ &
  $[0.01222, 0.71539]\ (0.36) $ &
  $[0.19997, 0.20091]\ (0.0005) $ &
  $[0.02336, 0.66029]\ (0.32) $ &
  $[0.19999, 0.20007]\ ( 4\times 10^{-5}) $ \\
  $P_z P_z$ &  $ 0.89168$ &
  $[0.20769, 0.98444]\ (0.39) $ &
  $[0.88757, 0.89180]\ (0.002)$ &
  $[0.19351, 0.97041]\ (0.39) $ &
  $[0.89144, 0.89168]\ (0.0001) $ \\
  Random &  $0.27118$ &
  $[0.02703, 0.42696]\ (0.2) $ &
  $[0.27025, 0.27116]\ (0.0005)$ &
  $[0.03769, 0.49442]\ (0.23) $ &
  $[0.27109, 0.27116]\ (4\times 10^{-5})$
\end{tabular}
\end{ruledtabular}
\end{table}
\end{widetext}

While system A is frustration-free, systems B,C,D are frustrated
 with substantially smaller spectral gaps and larger correlation
 lengths. Their ground states were found using a standard DMRG
 program with a constant bond dimension $D$. We first verified that
 the local expectation values do not change by more than $10^{-5}$
 when passing from $D=6$ to $D=20$. This indicated that $D=6$ is a
 good enough bond dimension for our systems. However, in order to
 suppress as much as possible the finite truncation errors, we first
 computed the ground state as an MPS with $D=20$ and then defined
 the subspace $V_L$ by truncating the bonds at the two cuts from
 index value $7$ onward. All together, this defined a subspace $V_L$
 of dimension $q=6\times 6=36$. The essential difference between
 this procedure and directly using the $D=6$ MPS is that now the 36
 vectors $\{\ket{I_\alpha}\}$ that span $V_L$ are resolved
 \emph{inside} the ball $L$ using $D=20$ instead of $D=6$.

In all systems we used 2-spin observables that acted on spin numbers
50,51 in the middle of the chain. For system-A we used three random
projectors. Each projector was chosen by randomly picking a
2-dimensional subspace out of the 9-dimensional subspace of two
spin-1 particles. In the B,C,D systems we used the three projectors:
(i) $P_xP_x$ projects to the product state of two spin up in the
$\hat{x}$ direction, i.e., $P_xP_x = \ket{+}\bra{+}\otimes
\ket{+}\bra{+}$, (ii) $P_zP_z=\ket{0}\bra{0}\otimes\ket{0}\bra{0}$
is identical to $P_x P_x$ but in the $\hat{z}$ direction, and (iii)
a projector to a random (pure) state in the 4-dimensional
space of the two spin $1/2$ particles.

For each system and each observable we applied the basic and CGO
algorithms with $\ell=3$, that corresponds to a ball $L$ of $8$
spins, and $\ell=4$, with $10$ spins. Since System-A is
frustration-free we only applied the basic algorithm to it. 

In the frustrated systems our SDP calculations were performed using
the open-source package \texttt{SDPA}\cc{ref:SDPAbook, ref:SDPArep}.
As noted in \Sec{sec:constant-D}, when working with constant $D$,
\Eq{eq:rhoL-constraint} is usually over-determined, and therefore
one should not use a complete spanning basis of $A_i$ matrices.
Following the idea from that section, we used a set of random
anti-Hermitian matrices $A_i$ which were created as random MPO of
bond dimension $D=20$. Then we ran the SDP for the lower and upper
bounds on $\av{B}$, gradually increasing the number of $A_i$. 

We used the following heuristics to decide when to stop adding more
$A_i$'s and
read the final result. First, at every step, we calculated the
difference between the upperbound and the lowerbound to $\av{B}$,
and this served as a natural error scale. If the difference became
negative (i.e., lowerbound was greater than the upperbound), we took
the results of the previous step. Additionally, we stopped if the
graph started to ``wiggle'': if at one step the upperbound or
lowerbound were worsened with respect to the previous step by more
than 1\% of the error length scale, we stopped and used the result
of the previous step. In all runs (in systems B,C,D) this procedure
resulted in an SDP estimate that is based on about 520-710 $A_i$
matrices. 

A typical evolution of the upper/lower bounds for $\ell=3,4$ as the
number of $A_i$ increases is presented in
\Fig{fig:sysD-PxPx-converge}. The graph shows the evolution of the
bounds in the case of system D (the XY model with random transverse
field) with the random projector.

The full numerical results for the four systems are summarized in
Tables~\ref{tab:SysA},\ref{tab:SysB},\ref{tab:SysC},\ref{tab:SysD}.
Table~\ref{tab:SysA} presents the $\ell=3,4$ basic algorithm results
for system A, the frustration-free AKLT model. The other three
tables are for the frustrated models B,C,D, and they also include
the CGO results. For each test we present the lowerbound and the
upperbound in the form of a $[k_{min},k_{max}]$ range, as well as
the estimated error $\Delta\EqDef \frac{1}{2}(k_{max}-k_{min})$.

In the frustrated case, the $\ell=3,4$ radii of the local
environment are well below the range where decay of correlation
should be felt. It is not surprising then that the basic algorithm 
in these cases performs very badly, and is unable to
give a single meaningful digit in the approximation of $\av{B}$. At
the same time, quite remarkably, the CGO algorithm manages to recover
3--4 digits (and often 4--5 digits) from $\av{B}$. Moreover, these
results are always better than their equivalent basic results for
the frustration-free AKLT model, despite having a much larger
correlation length ($\xi=8.8$ and $\xi=4.3$ vs. $\xi=0.9$ for the
AKLT) and a much larger $V_L$ subspace ($q=36$ vs. $q=4$ for the
AKLT).

\section{Conclusions}
\label{sec:discussion}

We have presented a new approach to approximate the expectation
value of a local operator given a PEPS tensor-network. Our initial
observation is that while this is generally a computationally hard
task (\#P-hard), it is not necessarily the case for PEPS that
represent gapped ground states, since these states have much more
structure. Our approach circumvents the exponentially expensive
contraction of the environment by estimating the local expectation
value using only a local patch of the PEPS tensor-network around it.
We have presented two algorithms to accomplish that. The basic
algorithm simply diagonalizes the observable in the subspace $V_L$
and uses its extreme eigenvalues as upper/lower bounds for the
expectation value. We have argued that this range is expected to
converge as the radius of the local patch increases; it should in
particular be useful for frustration-free systems. The second
algorithm, CGO, builds upon the basic algorithm, but uses optimization
over commutators to narrow down the eigenvalue range significantly.
For it to work, the system has to be frustrated.

As demonstrated by our numerical tests, both algorithms allow one to
extract a significant amount of information about the local
expectation value using only a local patch of the surrounding tensor
network. Contrary to what one might have expected, it seems that in
frustrated systems much more information can be extracted from a
local patch. In hindsight, this is reasonable: in frustrated systems
the ground state is not a common eigenstate of all local
terms~\footnote{Strictly speaking, one can have a frustrated system
in which the groundstate is a common eigenstate -- for example, a
frustrated classical system. However, in some sense these are not
`genuine' frustrated systems, since they can be trivially
transformed into a frustration-free system by by locally subtracting
an identity term from every local term in the Hamiltonian. This
would make the system frustration-free and only shift its spectrum
by a constant.}; only once we consider all local terms, we satisfy
the eigenstate equation. This implies that the action of one local
term must somehow cancel the action of other local terms, and this
inter-dependence leads to many non-trivial constraints that can be
exploited to recover $\rho_L$ from its underlying subspace (see
\Cor{cor:rhoL-constraint}).

Our work leaves many open questions for future research. From a
numerical point of view, the most immediate question we would like
to answer is how effective the CGO algorithm is in 2D. As it stands
now, repeating the same procedure that we have used in the 1D case
is impractical. Tentatively, for the CGO algorithm to work, one
needs a high enough $D$ that would yield good approximation to the
groundstate of a frustrated system, and in addition the dimension of
the $V_L$ subspace should be smaller than the physical dimension of
the spins inside $L$ -- i.e., the area law should be non-trivially
felt. Consider, for example, the rectangular grid in
\Fig{fig:Lregion}. There, we have $\dim V_L = D^{2(4\ell+3)}$ while
the physical dimension goes like $d^{2(\ell+1)^2}$. So if we
consider a $D=4$ PEPS over $d=2$ spins, the physical dimension wins
over $\dim V_L$ at $\ell=7$, and reasonable results are expected at
$\ell=8$ or beyond. This would mean one has to work with matrices of
size $2^{162}\times 2^{162}$, an impossible task.  It is therefore
clear that the CGO algorithm cannot be used directly in 2D.
Nevertheless, it might still be possible to apply it in approximate
manner. For example, by projecting the system to a random subspace
and using concentration results like the Johnson-Lindenstrauss
lemma\cc{ref:JL1984}, or perhaps, by exploiting the symmetries in
the problem (like translation invariance) in a clever way.

It would be also interesting to see if some aspects of CGO can be
used in existing PEPS algorithms. The constraints of
\Cor{cor:rhoL-constraint} may be useful in improving the
simple-update method; even if we deal with small regions $L$ that do
not provide a unique answer for $\rho_L$, satisfying the constraints
in \Cor{cor:rhoL-constraint} may improve upon the simple
product-state approximation of the environment. In addition, they
might be useful for increasing the numerical stability of current
algorithms. Finally, from a numerical point of view it would be
interesting to see if \Cor{cor:rhoL-constraint} could be used to
verify that we have an eigenstate at our hands. This might be
useful, for example, for settling the nature of the groundstate of
the antiferromagnetic Heisenberg model on the kagome
lattice\cc{ref:EV2010-Kagome, ref:YHW2011-Kagome,
ref:DMS2012-Kagome, ref:IBSP2013-Kagome, ref:CCAW2013-Kagome}.

From a more theoretical point of view, it would be interesting to
gain better understanding of the CGO algorithm, and in particular
the implications of the constraints in \Cor{cor:rhoL-constraint}.
For example, can we find sufficient conditions to when the algorithm
gives good approximations (i.e., narrow $[k_{min},k_{max}]$ range)?
From a computational complexity point of view, this might be a step
in showing that PEPS of ground states can serve as an NP witness. A
much more ambitious goal in that direction would be to show that the
complexity of the local Hamiltonian problem for gapped Hamiltonians
on a regular lattice with a unique ground state is inside NP.


\begin{acknowledgments}
  We thank T.~Kuwahara, I.~Latorre, I.~Cirac, D.~ P\'erez-Garc\'ia,
  and in particular D.~Aharonov for many helpful discussions. We
  thank T.~Nishino for bringing to our attention early references for
  the uses of PEPS tensor-network and the CTM.
  A.A was supported by Core grants of Centre for Quantum
  Technologies, Singapore. Research at the Centre for Quantum
  Technologies is funded by the Singapore Ministry of Education and
  the National Research Foundation, also through the Tier 3 Grant
  random numbers from quantum processes.
\end{acknowledgments}

\appendix

\section{Proof of \Lem{lem:FF-AGSP}}

The proof of \Lem{lem:FF-AGSP} utilizes the
detectability-lemma\cc{ref:AAVZ2011-DL} (see also
\Ref{ref:AAV2016-DLD}), and to a large extent already appeared in
\Ref{ref:AAVZ2011-DL}. We give the full proof here for sake of
completeness. 

We start by introducing the detectability lemma. Given a local
Hamiltonian $H=\sum_i h_i$, we 
define for every $h_i$ a projector $P_i$ that projects
onto its local ground space.  Next we partition the $P_i$ terms into
$g$ `layers', such that each layer is a collection of projectors
with disjoint support. For example, a 1D chain can be partitioned
into two layers -- one for projectors on $(i,i+1)$ with
even $i$'s, and the second for odd $i$'s. Then for every layer
$\ell$, we define the projector $\Pi_\ell$, which is the product of
all the $P_i$ projectors in that layer; it is the projector onto the
common ground space of all local terms in that layer. The
detectability lemma operator, $\DL(H)$ is then defined to be the
product of the layer projectors: $\DL(H)\EqDef\Pi_1\cdots \Pi_g$.
Because the system is frustration free, it follows that 
$\DL(H)\ket{\gs}=\ket{\gs}$. The detectability lemma states that
if both the number of layers and the spectral gap of $H$ 
are of $\orderof{1}$, then for every state $\ket{\psi}$ orthogonal
to the ground space, 
$\norm{\DL(H)\ket{\gs^\perp}}^2 \le c<1$ for some constant $c$. 
Therefore, taking $\ell$ powers of the detectability lemma operator,
we get an exponentially good approximation to the ground space
projector:
\begin{align}
\label{eq:DL}
  \norm{\DL^\ell(H) - \ket{\gs}\bra{\gs}} \le
  e^{-\orderof{\ell}} \,.
\end{align}

Let us now return to the proof of the lemma. It consists of two
steps. The first step is to notice that 
\begin{align}
\label{eq:light-cone}
  B_L\ket{\gs} = P_V B\ket{\gs} = P_V \DL^{\ell'}(H) B\ket{\gs} \,,
\end{align}
where $\ell'=\orderof{\ell}$. The first equality follows from
definition. The second equality follows from writing
$\DL(H)^{\ell'}$ in terms of the local $P_i$ projectors, and
noticing that each such $P_i$ projector can be either `pulled' from
the $P_V$ projector or from the ground state $\ket{\gs}$. This is
illustrated in \Fig{fig:light-cone} for the 1D case, see
\Ref{ref:AAVZ2011-DL} for more details.
\begin{figure}
  \includegraphics[scale=0.9]{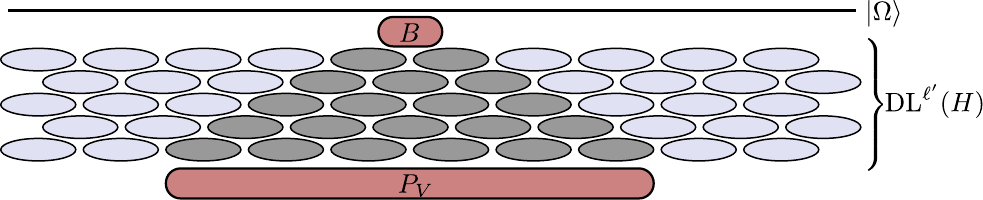}
  \caption{\label{fig:light-cone} Illustration of the second
  inequality in \Eq{eq:light-cone} in the 1D case. The ovals
  represent the $P_i$ projectors in $\DL^{\ell'}(H)$, organized in
  layers. The dark gray ovals are projectors that can be `pulled'
  out from the $P_V$ operator (here, on the bottom). They form a sort of
  casual `light-cone' defined by the local operator $B$. The rest of
  the projectors (represented by light-gray ovals) can be pulled out
  of the ground state $\ket{\gs}$ (here, at the top). }
\end{figure}

The second step uses \eqref{eq:DL} to replace $\DL^{\ell'}(H)$ by
$\ket{\gs}\bra{\gs} + e^{-\orderof{\ell'}} =\ket{\gs}\bra{\gs} +
e^{-\orderof{\ell}}$, giving us
\begin{align}
  B_L\ket{\gs} &= P_V\cdot \ket{\gs}\bra{\gs}\cdot B \ket{\gs} 
      + e^{-\orderof{\ell}} \\
    &= \av{B}\ket{\gs} + e^{-\orderof{\ell}}\,. \nonumber
\end{align}
This proves the lemma.

\bibliography{LOCPEPS}

\end{document}